\theoremstyle{plain}
\newtheorem{theorem}{Theorem}[section]
\newtheorem{proposition}[theorem]{Proposition}
\theoremstyle{definition}
\newtheorem{definition}[theorem]{Definition}
\theoremstyle{remark}
\newtheorem{remark}[theorem]{Remark}
\numberwithin{equation}{section}
\begin{document}
\title[Discrete approximations for the nonlinear Schr\"{o}dinger dynamical
system]{Discrete approximations on functional classes for the integrable
nonlinear Schr\"{o}dinger dynamical system: \\
a symplectic finite-dimensional reduction approach}
\author[Jan L. Cie\'{s}li\'{n}ski and Anatolij Prykarpatski]{Jan L. Cie\'{s}%
li\'{n}ski$^{\ast )}$ and Anatolij K. Prykarpatski$^{\ast \ast )}$}
\address{ $^{\ast )}$ Uniwersytet w Bia\l ymstoku, Wydzia{\l } Fizyki, ul.
Lipowa 41, 15-424 Bia\l ystok, Poland }
\email{janek@alpha.uwb.edu.pl}
\address{$^{\ast \ast )}$ AGH University of Science and Technology, Dept. of
Applied Mathematics, Krak\'ow, Poland }
\email{pryk.anat@ua.fm}
\subjclass{35Q55, 37J15, 65P10}
\keywords{discretization schemes, discrete nonlinear Schr\"{o}dinger
dynamical system, Lax type representation, gradient-holonomic algorithm,
finite dimensional symplectic reduction approach}

\begin{abstract}
We investigate discretizations of the integrable discrete nonlinear
Schr\"odinger dynamical system and related symplectic structures. We develop
an effective scheme of invariant reducing the corresponding infinite system
of ordinary differential equations to an equivalent finite system of
ordinary differential equations with respect to the evolution parameter. We
construct a finite set of recurrent algebraic regular relations allowing to
generate solutions of the discrete nonlinear Schr\"odinger dynamical system
and we discuss the related functional spaces of solutions. Finally, we
discuss the Fourier transform approach to studying the solution set of the
discrete nonlinear Schr\"odinger dynamical system and its
functional-analytical aspects.
\end{abstract}

\maketitle

\section{Introduction}

With its origins going back several centuries, discrete analysis becomes now
an increasingly central methodology for many mathematical problems related
to discrete systems and algorithms, widely applied in modern science. Our
theme, being related with studying integrable discretizations of nonlinear
integrable dynamical systems and the limiting properties of their solution
sets, is of deep interest in many branches of modern science and technology,
especially in discrete mathematics, numerical analysis, statistics and
probability theory as well as in electrical and electronic engineering. In
fact, this topic belongs to a much more general realm of mathematics, namely
to calculus, differential equations and differential geometry. Thereby,
although the topic is discrete, our approach to treating this problem will
be completely analytical.

In this work we will analyze the properties of discrete approximation for
the nonlinear integrable Schr\"{o}dinger (NLS) dynamical system on a
functional manifold $\tilde{M}\subset L_{2}(\mathbb{R};\mathbb{C}^{2})$:
\begin{equation}
\left.
\begin{array}{l}
\frac{d}{dt}\psi \ =i\psi _{xx}-2i\alpha \psi \psi \psi ^{\ast }, \\[2ex]
\frac{d}{dt}\psi ^{\ast }\ =-i\psi _{xx}^{\ast }+2i\alpha \psi ^{\ast }\psi
\psi ^{\ast }%
\end{array}%
\right\} :=\tilde{K}[\psi ,\psi ^{\ast }],  \label{eq1.1}
\end{equation}%
where, by definition $(\psi ,\psi ^{\ast })^{\intercal }\in \tilde{M},$ $%
\alpha \in \mathbb{R}$ is a constant, the subscript $"x"$ means the partial
derivative with respect to the independent variable $x\in \mathbb{R},$ $%
\tilde{K}:\tilde{M}\rightarrow T(\tilde{M})$ is the corresponding vector
field on $\tilde{M}$ and $t\in \mathbb{R}$ is the evolution parameter. \ The
system (\ref{eq1.1}) possesses  a Lax type representation (see \cite{No})
and is Hamiltonian
\begin{equation}
\frac{d}{dt}(\psi ,\psi ^{\ast })^{\intercal }=-\tilde{\theta}\mathrm{grad}%
\tilde{H}[\psi ,\psi ^{\ast }]=\tilde{K}[\psi ,\psi ^{\ast }]  \label{eq1.1a}
\end{equation}%
with respect to the canonical Poisson structure $\tilde{\theta}$ and the
Hamiltonian function $\tilde H$, where
\begin{equation}
\tilde{\theta}:=\left(
\begin{array}{cc}
0 & -i \\
i & 0%
\end{array}%
\right)  \label{eq1.1b}
\end{equation}%
is a non-degenerate mapping $\tilde{\theta}:T^{\ast }(\tilde{M})\rightarrow
T(\tilde{M})$ on the smooth functional manifold $\tilde{M},$ and
\begin{equation}
\tilde{H}:=\frac{1}{2}\int\limits_{\mathbb{R}}dx\left[ \psi \psi _{xx}^{\ast
}+\psi _{xx}\psi ^{\ast }-2\alpha (\psi ^{\ast }\psi )^{2}\right] ,
\label{eq1.1c}
\end{equation}%
is a smooth mapping $\ \tilde{H}:\tilde{M}\rightarrow \mathbb{C}$. The
corresponding symplectic structure \cite{AM,Ar,Bl,BPS} for the Poissonian
operator \ (\ref{eq1.1b}) is defined by
\begin{eqnarray}
\tilde{\omega}^{(2)} &:&=-\frac{i}{2}\int_{\mathbb{R}}dx[<(d\psi ,d\psi
^{\ast })^{\intercal },\wedge \tilde{\theta}^{-1}(d\psi ,d\psi ^{\ast
})^{\intercal }>=  \label{eq1.1d} \\
&=&-i\int_{\mathbb{R}}dx[d\psi ^{\ast }(x)\wedge d\psi (x)],  \notag
\end{eqnarray}%
which is a non-degenerate and closed 2-form on the functional manifold $%
\tilde{M}.$

The simplest spatial discretizations of the dynamical system (\ref{eq1.1})
look as the flows%
\begin{equation}
\begin{array}{l}
\frac{d}{dt}\psi _{n}=\frac{i}{h^{2}}(\psi _{n+1}-2\psi _{n}+\psi
_{n-1})-2i\alpha \psi _{n}\psi _{n}\psi _{n}^{\ast }, \\[2ex]
\frac{d}{dt}\psi _{n}^{\ast }=-\frac{i}{h^{2}}(\psi _{n+1}^{\ast }-2\psi
_{n}^{\ast }+\psi _{n-1}^{\ast })+2i\alpha \psi _{n}^{\ast }\psi _{n}\psi
_{n}^{\ast }%
\end{array}
\label{eq1.2}
\end{equation}%
and%
\begin{equation}
\left.
\begin{array}{l}
\frac{d}{dt}\psi _{n}=\frac{i}{h^{2}}(\psi _{n+1}-2\psi _{n}+\psi
_{n-1})-i\alpha (\psi _{n+1}+\psi _{n-1})\psi _{n}\psi _{n}^{\ast }, \\[2ex]
\frac{d}{dt}\psi _{n}^{\ast }=-\frac{i}{h^{2}}(\psi _{n+1}^{\ast }-2\psi
_{n}^{\ast }+\psi _{n-1}^{\ast })+i\alpha (\psi _{n+1}^{\ast }+\psi
_{n-1}^{\ast })\psi _{n}\psi _{n}^{\ast },%
\end{array}%
\right\} :=K[\psi _{n},\psi _{n}^{\ast }]  \label{eq1.3}
\end{equation}%
on some "discrete" submanifold $M_{h},$ where, by definition, $\{(\psi _{n},$
$\psi _{n}^{\ast })^{\intercal }\in \mathbb{C}^{2}:$ $n\in \mathbb{%
Z\}\subset }\ $ $M_{h}$ $\subset l_{2}(\mathbb{Z};\mathbb{C}^{2})$ and $%
K:M_{h}\rightarrow T(M_{h})$ is the corresponding vector field on $M_{h}.$

\begin{definition}
If for a function $(\psi ,$ $\psi ^{\ast })^{\intercal }\in W_{2}^{2}(%
\mathbb{R};\mathbb{C}^{2})\ $there exists the point-wise limit $%
\lim_{h\rightarrow 0}(\psi _{n},$ $\psi _{n}^{\ast })^{\intercal }=(\psi (x),
$ $\psi ^{\ast }(x)))^{\intercal },\ $where the set of vectors $(\psi _{n},$
$\psi _{n}^{\ast })^{\intercal }\in \mathbb{C}^{2},n\in \mathbb{Z},$ solves
the infinite system of equations \ (\ref{eq1.3}), the set $\{(\psi _{n},$ $%
\psi _{n}^{\ast })^{\intercal }\in \mathbb{C}^{2}:$ $n\in \mathbb{Z\}\subset
}\ l_{2}(\mathbb{Z};\mathbb{C}^{2})$ will be called an approximate solution
to the nonlinear Schr\"{o}dinger dynamical system \ (\ref{eq1.1}).
\end{definition}

\ It is well known \cite{AL,AL1} that the discretization scheme (\ref{eq1.3}%
) conserves the Lax type integrability \cite{No,Bl,BPS} and that the scheme (%
\ref{eq1.2}) does not. The integrability of \ (\ref{eq1.3}) can be easily
enough checked by means of either the gradient-holonomic integrability
algorithm \cite{Pr,PBPS,BPS} or the well known \cite{LWY} symmetry approach.
In particular, the discrete dynamical system (\ref{eq1.3}) is a Hamiltonian
one \cite{AM,Ar,Bl,Pr} on the symplectic manifold $M_{h}\subset l_{2}(%
\mathbb{Z};\mathbb{C}^{2})$ with respect to the non-canonical symplectic
structure%
\begin{equation}
\omega _{h}^{(2)}=-\sum_{n\in \mathbb{Z}}\frac{ih}{2(1-h^{2}\alpha \psi
_{n}^{\ast }\psi _{n})}<(d\psi _{n},d\psi _{n}^{\ast })^{\intercal },\wedge
(d\psi _{n},d\psi _{n}^{\ast })^{\intercal }>  \label{eq1.4}
\end{equation}%
on $M_{h}$ and looks as
\begin{equation}
\frac{d}{dt}(\psi _{n},\psi _{n}^{\ast })^{\intercal }=-\theta _{n}\mathrm{%
grad}H[\psi _{n},\psi _{n}^{\ast }]=K[\psi _{n},\psi _{n}^{\ast }],
\label{eq1.5}
\end{equation}%
where the Hamiltonian function
\begin{equation}
H=\sum_{n\in \mathbb{Z}}\frac{1}{h}(\psi _{n}\psi _{n+1}^{\ast }+\psi
_{n+1}\psi _{n}^{\ast }+\frac{2}{\alpha h^{2}}\ln |1-\alpha h^{2}\psi
_{n}^{\ast }\psi _{n}|)  \label{eq1.5a}
\end{equation}%
and the related Poissonian operator $\theta _{n}:$ $T^{\ast
}(M_{h})\rightarrow T(M_{h})$ equals
\begin{equation}
\theta _{n}:=\left(
\begin{array}{cc}
0 & -ih^{-1}(1-h^{2}\alpha \psi _{n}^{\ast }\psi _{n}) \\
ih^{-1}(1-h^{2}\alpha \psi _{n}^{\ast }\psi _{n}) & 0%
\end{array}%
\right) .  \label{eq1.6}
\end{equation}

\begin{remark}
\label{Rm_1.1}For the symplectic structure \ (\ref{eq1.4}) and,
respectively, the Hamiltonian function \ (\ref{eq1.5a}) to be suitably
defined on the manifold $M_{h}\subset l_{2}(\mathbb{Z};\mathbb{C}^{2})$ it
is necessary to assume additionally that the finite stability condition $%
\lim_{N,M\rightarrow \infty }\left( \prod\limits_{-N}^{M}\ (1-\alpha
h^{2}\psi _{n}^{\ast }\psi _{n})\right) \neq 0$ holds. The latter is simply
reduced as $h\rightarrow 0$ to the equivalent integral inequality $\alpha
\leq \int_{\mathbb{R}}(x\psi ^{\ast }\psi )^{2}dx\ \left( \int_{\mathbb{R}%
}\psi ^{\ast }\psi dx\right) ^{-1},$ which will be assumed for further to be
satisfied. respectively, the manifold $\tilde{M}\subset \tilde{W}_{2}^{2}(%
\mathbb{R};\mathbb{C}^{2}),$ where $\tilde{W}_{2}^{2}(\mathbb{R};\mathbb{C}%
^{2}):=W_{2}^{2}(\mathbb{R};\mathbb{C}^{2})\cap L_{2}^{(1)}(\mathbb{R};%
\mathbb{C}^{2})$ with the space $L_{2}^{(1)}(\mathbb{R};\mathbb{C}%
^{2}):=\{(\psi ,\psi ^{\ast })^{\intercal }\in L_{2}(\mathbb{R};\mathbb{C}%
^{2}):\int_{\mathbb{R}}x^{2}(\psi ^{\ast }\psi )^{2}dx<\infty \}.$
\end{remark}

The symplectic structure (\ref{eq1.4}) is well defined on the manifold $%
M_{h} $ and tends as $h\rightarrow 0$ to the symplectic structure (\ref%
{eq1.1d}) on $\tilde{M},$ and respectively the Hamiltonian function (\ref%
{eq1.5a}) tends to (\ref{eq1.1c}).

In this work we have investigated the structure of the solution set to the
discrete nonlinear Schr\"{o}dinger dynamical system \ (\ref{eq1.3}) by means
of a specially devised analytical approach for invariant reducing the
infinite system of ordinary differential equations (\ref{eq1.3}) to an
equivalent finite one of ordinary differential equations with respect to the
evolution parameter $t\in \mathbb{R}.$ \ As a result, there was constructed
a finite set of recurrent algebraic regular relationships, allowing to
expand the obtained before finite set of solutions to any discrete order $%
n\in \mathbb{Z},\ $ which makes it possible to present a wide class of the
approximate solutions to the nonlinear Schr\"{o}dinger dynamical system \ (%
\ref{eq1.1}).$\ $ It is worthy here to stress that the problem of
constructing an effective discretization scheme for the nonlinear Schr\"{o}%
dinger dynamical system \ (\ref{eq1.1}) and its generalizations proves to be
important both for applications \cite{APT,Ke,Sa,YJ} and for deeper
understanding the nature of the related algebro-geometric and analytic
structures responsible for their limiting stability and convergence
properties. From these points of view we would like to mention work \ \cite%
{LLC}, where \ the standard discrete Lie-algebraic approach \cite{Bl,BSP}
was recently applied\ to constructing a slightly different from \ (\ref%
{eq1.2}) and \ (\ref{eq1.3}) discretization of the nonlinear Schr\"{o}dinger
dynamical system \ (\ref{eq1.1}). As the symplectic reduction method,
devised in the present work for studying the solution sets of the discrete
nonlinear Schr\"{o}dinger dynamical system \ (\ref{eq1.3}), is completely
independent of a chosen discretization scheme, it would be reasonable and
interesting to apply it to that of \cite{LLC} and compare the corresponding
results subject to their computational effectiveness.

\section{A class of Hamiltonian discretizations of the NLS dynamical system}

The discretizations \ (\ref{eq1.2}) and \ (\ref{eq1.3}) can be extended to a
wide classs of Hamiltonian systems, if to assume that the Poissonian
structure is given by the local expression
\begin{equation}
\theta _{n}=\left(
\begin{array}{cc}
0 & -i\nu _{n}(g_{n}-{\tilde{h}}_{n}^{2}\alpha \psi _{n}^{\ast }\psi _{n})
\\
i\nu _{n}(g_{n}-{\tilde{h}}_{n}^{2}\alpha \psi _{n}^{\ast }\psi _{n}) & 0%
\end{array}%
\right) ,  \label{eq1.6a}
\end{equation}%
generalizing \ (\ref{eq1.6}), and the Hamiltonian function is chosen in the
form%
\begin{equation}
H=\sum_{n\in \mathbb{Z}}h_{n}\left( a_{n}\psi _{n}\psi _{n+1}^{\ast
}+b_{n}\psi _{n}\psi _{n}^{\ast }+c_{n}\psi _{n}\psi _{n-1}^{\ast }+\frac{%
2d_{n}}{\alpha }\ln |g_{n}-\alpha {\tilde{h}}_{n}^{2}\psi _{n}\psi
_{n}^{\ast }|\right) ,  \label{eq1.6b}
\end{equation}%
where $h_{n},{\tilde{h}}_{n},\nu _{n},a_{n},b_{n}$ $,c_{n},d_{n}$ and $g_{n}$
$\in \mathbb{R}_{+},n\in \mathbb{Z},$ are some parameters. The reality
condition, imposed on the Hamiltonian function \ (\ref{eq1.6b}), yields the
relationships%
\begin{equation}
c_{n}h_{n}=a_{n-1}^{\ast }h_{n-1}\ ,\quad b_{n}^{\ast }=b_{n}\ ,\quad
d_{n}^{\ast }=d_{n},  \label{eq1.6c}
\end{equation}
which should be satisfied for all $n\in \mathbb{Z}.$ As a result, there is
obtained the corresponding generalized discrete nonlinear Schr\"odinger
dynamical system $\frac{d}{dt}(\psi _{n},\psi _{n}^{\ast })^{\intercal
}:=-\theta _{n}\mathrm{grad}$ $H[\psi _{n},\psi _{n}^{\ast }],$ $n\in
\mathbb{Z},$ equivalent to the infinite set of ordinary differential
equations

\begin{equation}
\begin{array}{l}
\frac{d}{dt}\psi _{n}=i\nu _{n}\left( h_{n+1}c_{n+1}g_{n}\psi
_{n+1}+(b_{n}g_{n}h_{n}-2{\tilde{h}}_{n}^{2}h_{n}d_{n})\psi
_{n}+h_{n-1}a_{n-1}g_{n}\psi _{n-1}\right) - \\
\qquad \quad -i\alpha \nu _{n}{\tilde{h}}_{n}^{2}\left( h_{n+1}c_{n+1}\psi
_{n+1}+h_{n}b_{n}\psi _{n}+h_{n-1}a_{n-1}\psi _{n-1}\right) \psi _{n}\psi
_{n}^{\ast } \\
\frac{d}{dt}\psi _{n}^{\ast }=-i\nu _{n}\left( h_{n}a_{n}g_{n}\psi
_{n+1}^{\ast }+(b_{n}g_{n}h_{n}-2{\tilde{h}}_{n}^{2}h_{n}d_{n})\psi
_{n}^{\ast }+h_{n}c_{n}g_{n}\psi _{n-1}^{\ast }\right) + \\
\qquad \quad +i\alpha \nu_{n}{\tilde{h}}_{n}^{2}\left( h_{n}a_{n}\psi
_{n+1}^{\ast }+h_{n}b_{n}\psi _{n}^{\ast }+h_{n}c_{n}\psi _{n-1}^{\ast
}\right) \psi _{n}\psi _{n}^{\ast }%
\end{array}
\label{eq1.6d}
\end{equation}%
for all $n\in \mathbb{Z}.$ In the completely autonomous case, when $h_{n}=h,{%
\tilde{h}}_{n}=\tilde{h},\nu _{n}=\nu ,a_{n}=a,,b_{n}=b,c_{n}=c,d_{n}=d$ and
$g_{n}=g$ $\in \mathbb{R}_{+}$ \ for all $n\in \mathbb{Z},$ the Poissonian
structure \ (\ref{eq1.6a}) becomes

\begin{equation}
\theta _{n}=\left(
\begin{array}{cc}
0 & -i\nu (g-{\tilde{h}}^{2}\alpha \psi _{n}^{\ast }\psi _{n}) \\
i\nu (g-{\tilde{h}}^{2}\alpha \psi _{n}^{\ast }\psi _{n}) & 0%
\end{array}%
\right)  \label{eq1.6e}
\end{equation}%
and \ the Hamiltonian function \ (\ref{eq1.6b}) becomes

\begin{equation}
H=\sum_{n\in \mathbb{Z}}h\left( a\psi _{n}\psi _{n+1}^{\ast }+b\psi _{n}\psi
_{n}^{\ast }+c\psi _{n}\psi _{n-1}^{\ast }+\frac{2d}{\alpha }\ln |g-\alpha {%
\tilde{h}}^{2}\psi _{n}\psi _{n}^{\ast }|\right) .  \label{eq1.6f}
\end{equation}%
The corresponding reality condition for \ (\ref{eq1.6f}) reads as

\begin{equation}
c=a^{\ast }\ ,\quad b^{\ast }=b\ ,\quad d^{\ast }=d\ ,  \label{eq1.6g}
\end{equation}%
and the related discrete nonlinear Schr\"odinger dynamical systems reads as
a set of the equations

\begin{equation}
\begin{array}{l}
\ \frac{d}{dt}\psi _{n}=i\nu h\left( cg\psi _{n+1}+(bg-2{\tilde{h}}%
^{2}d)\psi _{n}+ag\psi _{n-1}\right) -i\alpha \nu h{\tilde{h}}^{2}\left(
c\psi _{n+1}+b\psi _{n}+a\psi _{n-1}\right) \psi _{n}\psi _{n}^{\ast }, \\%
[2ex]
\frac{d}{dt}\psi _{n}^{\ast }=-i\nu h\left( ag\psi _{n+1}^{\ast }+(bg-2{%
\tilde{h}}^{2}d)\psi _{n}^{\ast }+cg\psi _{n-1}^{\ast }\right) +i\alpha \nu h%
{\tilde{h}}^{2}\left( a\psi _{n+1}^{\ast }+b\psi _{n}^{\ast }+c\psi
_{n-1}^{\ast }\right) \psi _{n}\psi _{n}^{\ast },%
\end{array}
\label{eq1.6h}
\end{equation}%
for all $n\in \mathbb{Z}.$\texttt{\ }If now to make in \ (\ref{eq1.6d}) the
substitutions

\begin{eqnarray}
\nu _{n} &=&\frac{1}{h_{n}}\ ,\quad g_{n}=1\ ,\quad {\tilde{h}}_{n}=h_{n}\
,\quad a_{n}=\frac{1}{h_{n}^{2}}\ ,  \label{eq1.6i} \\
\quad b_{n} &=&0\ ,\quad c_{n}=\frac{1}{h_{n}h_{n-1}}\ ,\quad d_{n}=\frac{1}{%
h_{n}^{4}}\ ,  \notag
\end{eqnarray}%
one \ obtains the discrete nonlinear Schr\"odinger dynamical system%
\begin{equation}
\left.
\begin{array}{c}
\frac{d}{dt}\psi _{n}=\frac{i}{h_{n}^{2}}(\psi _{n+1}-2\psi
_{n}+h_{n}h_{n-1}^{-1}\psi _{n-1})- \\
-i\alpha (\psi _{n+1}+h_{n}h_{n-1}^{-1}\psi _{n-1})\psi _{n}^{\ast }\psi
_{n}, \\
\frac{d}{dt}\psi _{n}^{\ast }=-\frac{i}{h_{n}^{2}}(\psi _{n+1}^{\ast }-2\psi
_{n}^{\ast }+h_{n}h_{n-1}^{-1}\psi _{n-1}^{\ast })+ \\
+i\alpha (\psi _{n+1}^{\ast }+h_{n}h_{n-1}^{-1}\psi _{n-1}^{\ast })\psi
_{n}^{\ast }\psi _{n},%
\end{array}%
\right\} :=K_{n}^{(g)}[\psi _{n},\psi _{n}^{\ast }],  \label{eq1.6j}
\end{equation}
whose Hamiltonian function equals
\begin{equation}
H^{(g)}=\ \sum_{n\in \mathbb{Z}}h_{n}^{-1}(\psi _{n}\psi _{n+1}^{\ast }+\psi
_{n+1}\psi _{n}^{\ast }+\frac{2}{\alpha h_{n}^{2}}\ln |1-\alpha
h_{n}^{2}\psi _{n}^{\ast }\psi _{n}|).  \label{eq1.6k}
\end{equation}
Another substitution, \texttt{\ }taken in the form

\begin{equation}
c=a\neq 0\ ,\quad \nu hga=\frac{1}{h^{2}}\ ,\quad (bg-2{\tilde{h}}^{2}d)\nu
h=-\frac{2}{h^{2}}\ ,\quad \nu h{\tilde{h}}^{2}(a+b+c)=2,\   \label{eq1.6l}
\end{equation}
is also suitable in the limit $h\rightarrow 0$ \ for discretization the
nonlinear Schr\"odinger dynamical system \ (\ref{eq1.3}). The corresponding
discrete nonlinear Schr\"odinger dynamics takes the form
\begin{equation}
\begin{array}{l}
\frac{d}{dt}\psi _{n}=\frac{i}{h^{2}}(\psi _{n+1}-2\psi _{n}+\psi _{n-1})-%
\frac{2i\alpha }{2+\mu }(\psi _{n+1}+\psi _{n-1}+\mu \psi _{n})\psi _{n}\psi
_{n}^{\ast }, \\[2ex]
\frac{d}{dt}\psi _{n}^{\ast }=-\frac{i}{h^{2}}(\psi _{n+1}^{\ast }-2\psi
_{n}^{\ast }+\psi _{n-1}^{\ast })+\frac{2i\alpha }{2+\mu }(\psi _{n+1}^{\ast
}+\psi _{n-1}^{\ast }+\mu \psi _{n}^{\ast })\psi _{n}\psi _{n}^{\ast }%
\end{array}
\label{eq1.6m}
\end{equation}%
for all for all $n\in \mathbb{Z},$ where $\mu =b/a\in \mathbb{R}_{+}.$ \ \
Thus we obtained a one-parameter family of Hamiltonian discretizations of
the NLS equation. The set of relationships (\ref{eq1.6l}) admits a lot of
reductions, for instance, one can take
\begin{equation}
\nu =1,\quad g=1,\quad a=\frac{1}{h^{3}},\quad d=\left( \frac{\mu +2}{2}%
\right) ^{2}\frac{1}{h^{5}},\quad \frac{{\tilde{h}}^{2}}{h^{2}}=\frac{2}{%
2+\mu },  \label{eq1.6n}
\end{equation}%
not changing the infinite set of equations \ (\ref{eq1.6m}).

All of the constructed above discretizations of the nonlinear Schr\"odinger
dynamical system \ (\ref{eq1.1}) on the functional manifold $\tilde{M}$ can
be considered as either better or worse from the computational point of
view. If some of the discretization allows, except the Hamiltonian function,
some extra conservation laws, it can be naturally considered as a much more
suitable for numerical analysis case, allowing both to control the stability
of the solution convergence, as a parameter $\mathbb{R}_{+}\ni h\rightarrow
0,$ and to make an invariant solution space reduction\ to a lower effective
dimension of the related solution set.

It is worthy to observe here that the functional structure of the
discretization (\ref{eq1.3}) strongly depends both on the manifold $M$ and
on the convergent as $h\rightarrow 0\ $ \ form of the Hamiltonian function (%
\ref{eq1.6}). In particular, the existence of the limit
\begin{equation}
\tilde{H}:=\lim\limits_{h\rightarrow 0}\ \sum_{n\in \mathbb{Z}}\frac{1}{h}%
(\psi _{n}\psi _{n+1}^{\ast }+\psi _{n+1}\psi _{n}^{\ast }+\frac{2}{\alpha
h^{2}}\ln |1-\alpha h^{2}\psi _{n}^{\ast }\psi _{n}|),  \label{eq1.8}
\end{equation}%
coinciding with the expression \ (\ref{eq1.1c}), imposes a strong constraint
on the functional space $\tilde{M}\subset L_{2}(\mathbb{R};\mathbb{C}^{2}),$
namely, a vector-function $(\psi ,\psi ^{\ast })^{\intercal }\in W_{2}^{2}(%
\mathbb{R};\mathbb{C}^{2})$ $\subset L_{2}(\mathbb{R};\mathbb{C}^{2}),$
thereby fixing a suitable functional class \cite{AH} for which the
discretization conserves its physical Hamiltonian system sense.
Respectively, the limiting for (\ref{eq1.8}) symplectic structure
\begin{eqnarray}
\tilde{\omega}^{(2)} &:&=-\lim\limits_{h\rightarrow 0}\sum_{n\in \mathbb{Z}}%
\frac{i}{2}<(d\psi _{n},d\psi _{n}^{\ast })^{\intercal },\wedge \theta
_{n}^{-1}(d\psi _{n},d\psi _{n}^{\ast })^{\intercal }>=  \label{eq1.8a} \\
&=&-\lim\limits_{h\rightarrow 0}i\sum_{n\in \mathbb{Z}}h(1-\alpha h^{2}\psi
_{n}^{\ast }\psi _{n})^{-1}d\psi _{n}^{\ast }\wedge d\psi _{n}=-i\int_{%
\mathbb{R}}dx[d\psi ^{\ast }(x)\wedge d\psi (x)]  \notag
\end{eqnarray}%
on the manifold $\tilde{M}$ \ coincides exactly with the canonical
symplectic structure \ (\ref{eq1.1d}) for the dynamical system \ (\ref%
{eq1.1a}).

If now to assume that a vector function $(\psi ,\psi ^{\ast })^{\intercal
}\in W_{2}^{1}(\mathbb{R};\mathbb{C}^{2})\subset L_{2}(\mathbb{R};\mathbb{C}%
^{2}),$ the Hamiltonian function (\ref{eq1.6}) can be taken only as
\begin{equation}
H^{(s)}=\sum_{n\in \mathbb{Z}}(\psi _{n}\psi _{n+1}^{\ast }+\psi _{n+1}\psi
_{n}^{\ast }+\frac{2}{\alpha h^{2}}\ln |1-\alpha h^{2}\psi _{n}^{\ast }\psi
_{n}|),  \label{eq1.9}
\end{equation}%
and the corresponding Poissonian structure as
\begin{equation}
\theta _{n}^{(s)}:=\left(
\begin{array}{cc}
0 & ih^{-2} (h^2 \alpha \psi _{n}^{\ast }\psi -1) \\
ih^{-2}(1-h^{2}\alpha \psi _{n}^{\ast }\psi ) & 0%
\end{array}%
\right)  \label{eq1.10}
\end{equation}%
The limiting for (\ref{eq1.9}) Hamiltonian function
\begin{equation}
\tilde{H}^{(s)}:=\lim\limits_{h\rightarrow 0}H^{(s)}\ =\int\limits_{\mathbb{R%
}}dx(\psi \psi _{x}^{\ast }+\psi _{x}\psi ^{\ast })=0\   \label{eq1.10aa}
\end{equation}%
\ becomes trivial and, simultaneously, the limiting for (\ref{eq1.10})
symplectic structure
\begin{eqnarray}
\tilde{\omega}_{(s)}^{(2)} &:&=\lim\limits_{h\rightarrow 0}\sum_{n\in
\mathbb{Z}}\frac{i}{2}<(d\psi _{n},d\psi _{n}^{\ast })^{\intercal },\wedge
\theta _{n}^{(s),-1}(d\psi _{n},d\psi _{n}^{\ast })^{\intercal }>=  \notag \\
&=&\lim\limits_{h\rightarrow 0}i\sum_{n\in \mathbb{Z}}h^{2}(1-\alpha
h^{2}\psi _{n}^{\ast }\psi _{n})^{-1}d\psi _{n}^{\ast }\wedge d\psi _{n}=0
\label{eq1.10a}
\end{eqnarray}%
becomes trivial too. Thus, the functional space $W_{2}^{1}(\mathbb{R};%
\mathbb{C}^{2})\subset L_{2}(\mathbb{R};\mathbb{C}^{2})$ is not suitable for
the discretization \ (\ref{eq1.3}) of the nonlinear integrable Schr\"odinger
dynamical system \ (\ref{eq1.1}).

It is important here to stress that the discretization parameter $h\in
\mathbb{R}_{+}$ can be taken as depending on the node $n\in \mathbb{Z}:$ $%
h\rightarrow h_{n}\in \mathbb{R}_{+},$ which satisfies the condition $%
\sup\limits_{n\in \mathbb{Z}}h_{n}\ \leq \varepsilon ,$ where the condition $%
\varepsilon \rightarrow 0$ should be later imposed. For instance, one can
replace the dynamical system (\ref{eq1.3}) by\ (\ref{eq1.6j}), the
Poissonian structure (\ref{eq1.4}) by
\begin{equation}
\theta _{n}^{(g)}:=\left(
\begin{array}{cc}
0 & ih_{n}^{-1}(h_{n}^{2}\alpha \psi _{n}^{\ast }\psi -1) \\
ih_{n}^{-1}(1-h_{n}^{2}\alpha \psi _{n}^{\ast }\psi ) & 0%
\end{array}%
\right)   \label{eq1.12}
\end{equation}%
and, respectively, the Hamiltonian function (\ref{eq1.6}) becomes exactly\ (%
\ref{eq1.6k}).

It is easy to check that the modified discrete dynamical system (\ref{eq1.6j}%
) can be equivalently rewritten as
\begin{equation}
\frac{d}{dt}(\psi _{n},\psi _{n}^{\ast })^{\intercal }=-\theta _{n}^{(g)}%
\mathrm{grad}H^{(g)}[\psi _{n},\psi _{n}^{\ast }]\   \label{eq1.14}
\end{equation}%
for all $n\in \mathbb{Z},$ meaning, in particular, that the Hamiltonian
function (\ref{eq1.6k}) is conservative. The latter follows from the fact
that the skewsymmetric operator (\ref{eq1.12}) is Poissonian on the
discretized manifold $M_{h}.\ $\ Moreover, if to impose the constraint that
uniformly in $n\in \mathbb{Z}$ the limit \textrm{lim}$_{\varepsilon
\rightarrow 0}(h_{n}h_{n-1}^{-1})=1,$ the dynamical system \ (\ref{eq1.6j})
reduces to (\ref{eq1.1}) and the corresponding limiting symplectic structure
\begin{eqnarray}
\tilde{\omega}_{(g)}^{(2)} &:&=-\lim\limits_{\varepsilon \rightarrow
0}\sum_{n\in \mathbb{Z}}\frac{i}{2}<(d\psi _{n},d\psi _{n}^{\ast
})^{\intercal },\wedge \theta _{n}^{(g),-1}(d\psi _{n},d\psi _{n}^{\ast
})^{\intercal }>=  \label{eq1.15a} \\
&=&-\lim\limits_{\varepsilon \rightarrow 0}i\sum_{n\in \mathbb{Z}%
}h_{n}(1-\alpha h_{n}^{2}\psi _{n}^{\ast }\psi _{n})^{-1}d\psi _{n}^{\ast
}\wedge d\psi _{n}=  \notag \\
&=&-i\int_{\mathbb{R}}dx[d\psi ^{\ast }(x)\wedge d\psi (x)],  \notag
\end{eqnarray}%
coincides exactly with the symplectic structure (\ref{eq1.8a}).

\begin{remark}
It is, by now, a not solved, but interesting, problem whether the modified
discrete Hamiltonian dynamical system (\ref{eq1.6j}) sustains to be Lax type
integrable. It is left for studying in a separate work.
\end{remark}

\section{Conservation laws for the integrable discrete NLS system}

Taking into account that the discrete dynamical system (\ref{eq1.3}) is well
posed in the space $M_{h}:=w_{h,2}^{2}(\mathbb{Z};\mathbb{C}^{2})\subset
l_{2}(\mathbb{Z};\mathbb{C}^{2}),$ suitably approximating the Sobolev space
of functions $W_{2}^{2}(\mathbb{R};\mathbb{C}^{2}),$ we can go further and
to approximate the space $w_{h,2}^{2}(\mathbb{Z};\mathbb{C}^{2})$ by means
of an infinite hierarchy of strictly invariant finite dimensional subspaces $%
M_{h}^{(N)}\simeq \bar{w}_{h,2}^{2}(\mathbb{Z}_{(N)};\mathbb{C}^{2}),$ $N\in
\mathbb{Z}_{+}.$ In particular, as it was before shown both in \cite{AL,AL1}
by means of the inverse scattering transform method \cite{AL,No} and in \cite%
{BP,Pr,PBPS} \- by means of the gradient-holonomic approach \cite{PM}, the
discrete nonlinear Schr\"{o}dinger dynamical system (\ref{eq1.3}) possesses
on the manifold $M_{h}$ an infinite hierarchy of the functionally
independent conservation laws:
\begin{eqnarray}
&&\bar{\gamma}_{0}=\frac{1}{\alpha h^{3}}\sum_{n\in \mathbb{Z}}\ln |1-\alpha
h^{2}\psi _{n}^{\ast }\psi _{n}|,\ \ \ \ \ \gamma _{0}=\sum_{n\in \mathbb{Z}%
_{+}}\sigma _{n}^{{(0)}},  \label{eq1.15} \\
&&\gamma _{1}=\sum_{n\in \mathbb{Z}}(\sigma _{n}^{(1)}+\frac{1}{2}\sigma
_{n}^{(0)}\sigma _{n}^{(0)}),  \notag \\
&&\gamma _{2}=\sum_{n\in \mathbb{Z}}(\sigma _{n}^{(2)}+\frac{1}{3}\sigma
_{n}^{(0)}\sigma _{n}^{(0)}\sigma _{n}^{(0)}+\sigma _{n}^{(0)}\sigma
_{n}^{(1)}),\ \ ...,  \notag
\end{eqnarray}%
where the quantities $\sigma _{n}^{(j)},$ $n\in \mathbb{Z},$ $j\in \mathbb{Z}%
_{+},$ are defined as follows:
\begin{eqnarray}
&&\sigma _{n}^{(0)}=-\frac{1}{\alpha h^{2}}(\psi _{n}^{\ast }\psi
_{n-1}+\psi _{n-1}^{\ast }\psi _{n-2}),  \label{eq1.16} \\
&&\sigma _{n}^{(1)}=i\frac{d}{dt}\sigma _{n-1}^{(0)}+(1-\alpha h^{2}\psi
_{n-1}^{\ast }\psi _{n-1})(1-\alpha h^{2}\psi _{n-2}^{\ast }\psi _{n-2})+
\beta \frac{\alpha }{h^{2}}\psi _{n-1}^{\ast }(\psi _{n}+\psi _{n-1}),\ ...\
,  \notag
\end{eqnarray}%
and $\beta \in \mathbb{R}$ is an arbitrary constant parameter. As a result
of \ (\ref{eq1.16}) one finds the following infinite hierarchy of smooth
conservation laws:%
\begin{eqnarray}
\bar{H}_{0} &=&\sum_{n\in \mathbb{Z}}\ln |1-\alpha h^{2}\psi _{n}^{\ast
}\psi _{n}|,\   \label{eq1.16a} \\
H_{0} &=&\ \sum_{n\in \mathbb{Z}}\psi _{n}^{\ast }\psi _{n+1},\text{ \ \ }%
H_{0}^{\ast }=\ \sum_{n\in \mathbb{Z}}\psi _{n}\psi _{n+1}^{\ast },  \notag
\\
H_{1} &=&\sum_{n\in \mathbb{Z}}(\frac{1}{2}\psi _{n}^{2}\psi _{n-1}^{\ast
,2}+\psi _{n}\psi _{n+1}\psi _{n-1}^{\ast }\psi _{n}^{\ast }-\frac{\psi
_{n}\psi _{n-2}^{\ast }}{\alpha h^{2}}),  \notag \\
H_{1}^{\ast } &=&\sum_{n\in \mathbb{Z}}(\frac{1}{2}\psi _{n-1}^{2}\psi
_{n}^{\ast ,2}+\psi _{n-1}\psi _{n}\psi _{n+1}^{\ast }\psi _{n}^{\ast }-%
\frac{\psi _{n-2}\psi _{n}^{\ast }}{\alpha h^{2}}),  \notag
\end{eqnarray}%
\begin{eqnarray*}
H_{2} &=&\sum_{n\in \mathbb{Z}}[\frac{1}{3}\psi _{n}^{3}\psi _{n-1}^{\ast
,3}+\psi _{n}\psi _{n+1}\psi _{n-1}^{\ast }\psi _{n}^{\ast }(\psi _{n}\psi
_{n-1}^{\ast }+\psi _{n+1}\psi _{n}^{\ast }+ \\
&&+\psi _{n+2}\psi _{n+1}^{\ast })-\frac{\psi _{n}\psi _{n-1}^{\ast }}{%
\alpha h^{2}}(\psi _{n}\psi _{n-2}^{\ast }+\psi _{n+1}\psi _{n-1}^{\ast })-
\\
&&-\frac{\psi _{n}\psi _{n}^{\ast }}{\alpha h^{2}}(\psi _{n+1}\psi
_{n-2}^{\ast }+\psi _{n+2}\psi _{n-1}^{\ast })+\frac{\ \psi _{n}\psi
_{n-3}^{\ast }}{\alpha ^{2}h^{4}}], \\
H_{2}^{\ast } &=&\sum_{n\in \mathbb{Z}}[\frac{1}{3}\psi _{n}^{\ast ,3}\psi
_{n-1}^{3}+\psi _{n}^{\ast }\psi _{n+1}^{\ast }\psi _{n-1}\psi _{n}(\psi
_{n}^{\ast }\psi _{n-1}+\psi _{n+1}^{\ast }\psi _{n}+  \notag \\
&&+\psi _{n+2}^{\ast }\psi _{n+1})-\frac{\psi _{n}^{\ast }\psi _{n-1}}{%
\alpha h^{2}}(\psi _{n}^{\ast }\psi _{n-2}+\psi _{n+1}^{\ast }\psi _{n-1})-
\\
&&-\frac{\psi _{n}\psi _{n}^{\ast }}{\alpha h^{2}}(\psi _{n+1}^{\ast }\psi
_{n-2}+\psi _{n+2}^{\ast }\psi _{n-1})+\frac{\ \psi _{n}^{\ast }\psi _{n-3}}{%
\alpha ^{2}h^{4}}],
\end{eqnarray*}%
and so on.

Taking into account the functional structure of the equations \ (\ref{eq1.2}%
) or \ (\ref{eq1.3}), one can define the space $\mathcal{D}(M_{h})$ of
smooth functions $\gamma :M_{h}\rightarrow \mathbb{C}$ \ on $M_{h}$ as that
\ invariant with respect to the phase transformation \ $\mathbb{C}^{2}\ni
(\psi _{n},\psi _{n}^{\ast })\rightarrow (e^{\alpha }\psi _{n},e^{-\alpha
}\psi _{n}^{\ast })\in \mathbb{C}^{2}$ for any $n\in \mathbb{Z}$ and $\alpha
\in \mathbb{C}.$ Equivalently, a function $\gamma \in \mathcal{D}(M_{h})$
iff the following condition
\begin{equation}
\sum_{n\in \mathbb{Z}}<\mathrm{grad}\gamma \lbrack \psi _{n},\psi _{n}^{\ast
}],(\psi _{n},-\psi _{n}^{\ast })^{\intercal }>=0  \label{eq1.3a}
\end{equation}%
holds on $M_{h}$. Note that conserved quantities (\ref{eq1.16a}) belong to $%
\mathcal{D}(M_{h})$.

The conservation law $\bar{H}_{0}\in \mathcal{D}(M_{h})$ is a Casimir
function for the Poissonian structure \ (\ref{eq1.4}) on the manifold $M_{h},
$ that is for any $\gamma \in $ $\mathcal{D}(M_{h})$ the Poisson bracket
\begin{eqnarray}
\{\gamma ,\bar{H}_{0}\} &:&=\sum_{n\in \mathbb{Z}}<\mathrm{grad}\gamma
\lbrack \psi _{n},\psi _{n}^{\ast }],\theta _{n}\mathrm{grad}\bar{H}%
_{0}[\psi _{n},\psi _{n}^{\ast }]>=  \notag \\
&=& i \alpha h \sum_{n\in \mathbb{Z}}<\mathrm{grad}\gamma \lbrack \psi
_{n},\psi _{n}^{\ast }],(\psi _{n},-\psi _{n}^{\ast })>=0,  \label{eq1.16b}
\end{eqnarray}%
owing to the condition \ (\ref{eq1.3a}). The Hamiltonian function \ (\ref%
{eq1.6}) is obtained from the first three invariants of \ (\ref{eq1.16a}) as
\begin{equation}
H=\frac{2}{\alpha h^{3}}\bar{H}_{0}+\frac{1}{h}(H_{0}+H_{0}^{\ast }).
\label{eq1.16c}
\end{equation}

\begin{remark}
Similarly to the limiting condition \ (\ref{eq1.8}), the same limiting
expression one obtains from the discrete invariant function
\begin{equation}
H^{(w)}=\frac{1}{2\alpha h^{3}}\bar{H}_{0}-\frac{\alpha h}{4}%
(H_{1}+H_{1}^{\ast }),  \label{eq1.16d}
\end{equation}%
that is
\begin{equation}
\lim_{h-0}H^{(w)}=\tilde{H}:=\frac{1}{2}\int\limits_{\mathbb{R}}dx\left[
\psi \psi _{xx}^{\ast }+\psi _{xx}\psi ^{\ast }-2\alpha (\psi ^{\ast }\psi
)^{2}\right] .  \label{eq1.16e}
\end{equation}
\end{remark}

Based on these results, one can apply to the discrete dynamical system (\ref%
{eq1.3}) the Bogoyavlensky-Novikov type reduction scheme, devised before in
\cite{No,PBPS} and obtain a completely Liouville integrable finite
dimensional dynamical system on the manifold $M_{h}^{(N)}.$ Namely, we
consider the critical submanifold $M_{h}^{(N)}\subset M_{h}$ of the
following real-valued action functional \ \
\begin{equation}
\mathcal{L}_{_{h}}^{(N)}:=\sum_{n\in \mathbb{Z}}\mathcal{L}%
_{_{h}}^{(N)}[\psi _{n},\psi _{n}^{\ast }]=\bar{c}_{0}(h)\bar{H}%
_{0}+\sum_{j=0}^{N\ }c_{j}(h)(H_{j}+H_{j}^{\ast }),  \label{eq1.17}
\end{equation}%
where, by definition, $\bar{c}_{0},c_{j}:\mathbb{R}_{+}\rightarrow \mathbb{R}%
,$ $j=\overline{0,N},$ are suitably defined functions for arbitrary but
fixed $N\in \mathbb{Z}_{+},$ and
\begin{equation}
M_{h}^{(N)}:=\left\{ (\psi ,\psi ^{\ast })^{\intercal }\in M_{h}:\ \mathrm{%
grad}\mathcal{L}_{h}^{(N)}[\psi _{n},\psi _{n}^{\ast }]=0,\ n\in \mathbb{Z}%
\right\} .  \label{eq1.18}
\end{equation}

As one can easily show, the submanifold $M_{h}^{(N)}\subset M_{h}$ is
finite-dimensional and for any $N\in \mathbb{Z}_{+}$ is invariant with
respect to the vector field $K:M_{h}\rightarrow T(M_{h}).$ This property
makes it possible to reduce it on the submanifold $M_{h}^{(N)}\subset M_{h}$
and to obtain a resulting finite-dimensional system of ordinary differential
equations on $M_{h}^{(N)},$ whose solution manifold coincides with an
subspace of exact solutions to the initial dynamical system (\ref{eq1.3}).
The latter proves to be canonically Hamiltonian on the manifold $M_{h}^{(N)}$
and, moreover, completely Liouville-Arnold integrable. \ If the mappings $%
\bar{c}_{0},c_{j}:\mathbb{R}_{+}\rightarrow \mathbb{R},$ $j=\overline{0,N},$
are chosen in such a way that the flow \ (\ref{eq1.3}), invariantly reduced
on the finite dimensional submanifold $M_{h}^{(N)}\subset M_{h},$ is
nonsingular as $h\rightarrow 0$ \ and complete, then the corresponding
solutions to the discrete dynamical system \ (\ref{eq1.3}) will respectively
approach those to the nonlinear integrable Schr\"odinger dynamical system \ (%
\ref{eq1.1}).

Below we will proceed to realizing this scheme for the most simple cases $%
N=1 $ and $N=2.$ Another way of analyzing the discrete dynamical system \ (%
\ref{eq1.3}), being interesting enough, consists in applying the approaches
recently devised in \cite{Ci,MQ} and based on the long-time behavior of the
chosen discretization subject to a fixed Hamiltonian function structure.

\section{\label{Sec_2}The finite dimensional reduction scheme: the case $N=1$%
}

Consider the following non degenerate action functional
\begin{equation*}
\mathcal{L}_{\ h}^{(1)}=\sum_{n\in \mathbb{Z}}\bar{c}_{0}(h)\ln |1-\alpha
h^{2}\psi _{n}^{\ast }\psi _{n}|+\ \sum_{n\in \mathbb{Z}}c_{0}(h)(\psi
_{n}^{\ast }\psi _{n+1}+\psi _{n}\psi _{n+1}^{\ast })+
\end{equation*}%
\begin{eqnarray}
&&+\sum_{n\in \mathbb{Z}}c_{1}(h)(\frac{1}{2}\psi _{n}^{2}\psi _{n-1}^{\ast
,2}+\psi _{n}\psi _{n+1}\psi _{n-1}^{\ast }\psi _{n}^{\ast }-\frac{\psi
_{n}\psi _{n-2}^{\ast }}{\alpha h^{2}}+  \label{eq2.1} \\
&&+\frac{1}{2}\psi _{n}^{2}\psi _{n+1}^{\ast ,2}+\psi _{n}\psi _{n+1}\psi
_{n+1}^{\ast }\psi _{n+2}^{\ast }-\frac{\psi _{n-1}\psi _{n+1}^{\ast }}{%
\alpha h^{2}})\   \notag
\end{eqnarray}%
with mappings $\ \bar{c}_{0},c_{j}:\mathbb{R}_{+}\rightarrow \mathbb{R},$ $j=%
\overline{0,1},$ taken as \
\begin{equation}
\bar{c}_{0}(h)=\frac{4\xi +1}{2\alpha h^{3}},\text{ \ }c_{0}(h)=\frac{\xi }{h%
},\text{ \ \ }c_{1}(h)=\frac{\alpha h}{4},  \label{eq2.1a}
\end{equation}%
and being easily determined for any $\xi \in \mathbb{R}$ from the condition
for existence of a limit as $h\rightarrow 0$:
\begin{equation}
\mathcal{\tilde{L}}_{\ }^{(1)}:=\lim_{h\rightarrow 0}\mathcal{L}_{\ h}^{(1)}.
\label{eq2.1b}
\end{equation}%
The corresponding invariant critical submanifold
\begin{equation}
M_{h}^{(1)}:=\left\{ (\psi ,\psi ^{\ast })^{\intercal }\in M_{h}:\ \mathrm{%
grad}\mathcal{L}_{h}^{(1)}[\psi _{n},\psi _{n}^{\ast }]=0,\ n\in \mathbb{Z}%
\right\}  \label{eq2.2}
\end{equation}%
is equivalent to the following system of discrete up-recurrent relationships
with respect to indices $n\in \mathbb{Z}:$%
\begin{eqnarray*}
\psi _{n+2} &=&-\frac{-\bar{c}_{0}(h)/c_{1}(h)\psi _{n}}{(\frac{1}{\alpha
h^{2}}-\psi _{n+1\ }\psi _{n+1}^{\ast })(\frac{1}{\alpha h^{2}}-\psi
_{n}\psi _{n}^{\ast })}+ \\
&&+ \frac{2\psi _{n-1}c_{0}(h)/c_{1}(h)+\psi _{n\ }(\psi _{n+1}\psi
_{n-1}^{\ast }+\psi _{n-1}\psi _{n+1}^{\ast })}{(\frac{1}{\alpha h^{2}}-\psi
_{n+1\ }\psi _{n+1}^{\ast })}+ \\
& & + \frac{(\psi _{n+1}^{2}+\psi _{n-1}^{2})\psi _{n\ }^{\ast }-\psi _{n-2}(%
\frac{1}{\alpha h^{2}}-\psi _{n-1\ }\psi _{n-1}^{\ast })}{(\frac{1}{\alpha
h^{2}}-\psi _{n+1\ }\psi _{n+1}^{\ast })}:= \\
&=&\Phi _{+}(\psi _{n+1},\psi _{n+1}^{\ast };\psi _{n},\psi _{n}^{\ast
};\psi _{n-1},\psi _{n-1}^{\ast }),
\end{eqnarray*}%
\begin{eqnarray}
\psi _{n+2}^{\ast } &=&-\frac{-\bar{c}_{0}(h)/c_{1}(h)\psi _{n}^{\ast }}{(%
\frac{1}{\alpha h^{2}}-\psi _{n+1\ }\psi _{n+1}^{\ast })(\frac{1}{\alpha
h^{2}}-\psi _{n}\psi _{n}^{\ast })}+  \label{eq2.3} \\
&&+ \frac{2\psi _{n-1}^{\ast }c_{0}(h)/c_{1}(h)+\psi _{n\ }^{\ast }(\psi
_{n+1}\psi _{n-1}^{\ast }+\psi _{n-1}\psi _{n+1}^{\ast })}{(\frac{1}{\alpha
h^{2}}-\psi _{n+1\ }\psi _{n+1}^{\ast })}+  \notag \\
&& + \frac{(\psi _{n+1}^{\ast ,2}+\psi _{n-1}^{\ast ,2})\psi _{n\ }-\psi
_{n-2}^{\ast }(\frac{1}{\alpha h^{2}}-\psi _{n-1\ }\psi _{n-1}^{\ast })}{(%
\frac{1}{\alpha h^{2}}-\psi _{n+1\ }\psi _{n+1}^{\ast })}:=  \notag \\
&=&\Phi _{+}^{\ast }(\psi _{n+1},\psi _{n+1}^{\ast };\psi _{n},\psi
_{n}^{\ast };\psi _{n-1},\psi _{n-1}^{\ast }).  \notag
\end{eqnarray}%
The latter can be also rewritten as the system of down-recurrent mappings
\begin{eqnarray}
\psi _{n-2} &=&-\frac{-\bar{c}_{0}(h)/c_{1}(h)\psi _{n}}{(\frac{1}{\alpha
h^{2}}-\psi _{n-1\ }\psi _{n-1}^{\ast })(\frac{1}{\alpha h^{2}}-\psi
_{n}\psi _{n}^{\ast })}+  \label{eq2.4} \\
&&+\frac{2\psi _{n-1}c_{0}(h)/c_{1}(h)+\psi _{n\ }(\psi _{n+1}\psi
_{n-1}^{\ast }+\psi _{n-1}\psi _{n+1}^{\ast })}{(\frac{1}{\alpha h^{2}}-\psi
_{n-1\ }\psi _{n-1}^{\ast })}+  \notag \\
&&+\frac{(\psi _{n+1}^{2}+\psi _{n-1}^{2})\psi _{n\ }^{\ast }-\psi _{n+2}(%
\frac{1}{\alpha h^{2}}-\psi _{n+1\ }\psi _{n+1}^{\ast })}{(\frac{1}{\alpha
h^{2}}-\psi _{n-1\ }\psi _{n-1}^{\ast })}:=  \notag \\
&=&\Phi _{-}(\psi _{n-1},\psi _{n-1}^{\ast };\psi _{n},\psi _{n}^{\ast
};\psi _{n+1},\psi _{n+1}^{\ast }),  \notag
\end{eqnarray}%
\begin{eqnarray*}
\psi _{n-2}^{\ast } &=&-\frac{-\bar{c}_{0}(h)/c_{1}(h)\psi _{n}^{\ast }}{(%
\frac{1}{\alpha h^{2}}-\psi _{n-1\ }\psi _{n-1}^{\ast })(\frac{1}{\alpha
h^{2}}-\psi _{n}\psi _{n}^{\ast })}+ \\
&&+\frac{2\psi _{n-1}^{\ast }c_{0}(h)/c_{1}(h)+\psi _{n\ }^{\ast }(\psi
_{n+1}\psi _{n-1}^{\ast }+\psi _{n-1}\psi _{n+1}^{\ast })}{(\frac{1}{\alpha
h^{2}}-\psi _{n-1\ }\psi _{n-1}^{\ast })}+ \\
&&+\frac{(\psi _{n+1}^{\ast ,2}+\psi _{n-1}^{\ast ,2})\psi _{n\ }-\psi
_{n+2}^{\ast }(\frac{1}{\alpha h^{2}}-\psi _{n+1\ }\psi _{n+1}^{\ast })}{(%
\frac{1}{\alpha h^{2}}-\psi _{n-1\ }\psi _{n-1}^{\ast })}:= \\
&=&\Phi _{-}^{\ast }(\psi _{n-1},\psi _{n-1}^{\ast };\psi _{n},\psi
_{n}^{\ast };\psi _{n+1},\psi _{n+1}^{\ast }),
\end{eqnarray*}%
which also hold for all $n\in \mathbb{Z}.$ The \ relationships \ (\ref{eq2.3}%
) (or, the same, relationships\ (\ref{eq2.4})) mean that the whole
submanifold $M_{h}^{(1)}\subset M_{h}$ \ is retrieved by means of the
initial values \ $(\bar{\psi}_{-1},\bar{\psi}_{-1}^{\ast };\bar{\psi}_{0},%
\bar{\psi}_{0}^{\ast };\bar{\psi}_{1},\bar{\psi}_{1}^{\ast };\bar{\psi}_{2},%
\bar{\psi}_{2}^{\ast })^{\intercal }\in M_{h}^{(1)}\simeq \mathbb{C}^{8}.$
Thereby, the submanifold $M_{h}^{(1)}\subset M_{h}^{8}$ is naturally
diffeomorphic to the finite dimensional complex space $M_{h}^{8}.$ Taking
into account the canonical symplecticity \cite{Pr,PBPS} of the submanifold $%
M_{h}^{(1)}\simeq M_{h}^{8}$ and its invariance with respect to the vector
field \ (\ref{eq1.3}) one can easily reduce it on this submanifold $%
M_{h}^{(1)}\simeq M_{h}^{8}$ and obtain the following equivalent finite
dimensional flow on the manifold $M_{h}^{8}:$%
\begin{eqnarray*}
\frac{d}{dt}\psi _{2}\ &=&\frac{i}{h^{2}}[\Phi _{+}(\psi _{\ 2},\psi _{\
2}^{\ast };\psi _{1},\psi _{1}^{\ast };\psi _{\ 0},\psi _{\ 0}^{\ast
})-2\psi _{2}+\psi _{1}]- \\
&&-i\alpha \lbrack \Phi _{+}(\psi _{\ 2},\psi _{\ 2}^{\ast };\psi _{1},\psi
_{1}^{\ast };\psi _{\ 0},\psi _{\ 0}^{\ast })+\psi _{1}]\psi _{2}\psi
_{2}^{\ast },\
\end{eqnarray*}

\begin{equation}
\frac{d}{dt}\psi _{1}\ =\frac{i}{h^{2}}[\psi _{2}-2\psi _{1}+\psi
_{0}]-i\alpha (\psi _{2}+\psi _{0})\psi _{1}\psi _{1}^{\ast },\   \notag
\end{equation}

\bigskip
\begin{equation}
\frac{d}{dt}\psi _{0}\ =\frac{i}{h^{2}}[\psi _{1}-2\psi _{0}+\psi
_{-1}]-i\alpha (\psi _{1}+\psi _{-1})\psi _{0}\psi _{0}^{\ast },
\label{eq2.5}
\end{equation}

\bigskip
\begin{eqnarray}
\frac{d}{dt}\psi _{-1}\ &=&\frac{i}{h^{2}}[\psi _{0}-2\psi _{-1}+\Phi
_{-}(\psi _{\ -1},\psi _{\ -1}^{\ast };\psi _{0},\psi _{0}^{\ast };\psi _{\
1},\psi _{\ 1}^{\ast })]-  \notag \\
&&-\ i\alpha \lbrack \psi _{0}+\Phi _{-}(\psi _{\ -1},\psi _{\ -1}^{\ast
};\psi _{0},\psi _{0}^{\ast };\psi _{\ 1},\psi _{\ 1}^{\ast })]\psi
_{-1}\psi _{-1}^{\ast },  \notag
\end{eqnarray}%
and

\bigskip

\begin{eqnarray}
\frac{d}{dt}\psi _{-1}^{\ast }\ &=&-\frac{i}{h^{2}}[\psi _{0}^{\ast }-2\psi
_{-1}^{\ast }+\Phi _{-}^{\ast }(\psi _{\ -1},\psi _{\ -1}^{\ast };\psi
_{0},\psi _{0}^{\ast };\psi _{\ 1},\psi _{\ 1}^{\ast })]+  \notag \\
&&+\ i\alpha \lbrack \psi _{0}^{\ast }+\Phi _{-}^{\ast }(\psi _{\ -1},\psi
_{\ -1}^{\ast };\psi _{0},\psi _{0}^{\ast };\psi _{\ 1},\psi _{\ 1}^{\ast
})]\psi _{-1}\psi _{-1}^{\ast },  \notag
\end{eqnarray}%
\begin{equation*}
\frac{d}{dt}\psi _{0}^{\ast }\ =-\frac{i}{h^{2}}[\psi _{1}^{\ast }-2\psi
_{0}^{\ast }+\psi _{-1}^{\ast }]+i\alpha (\psi _{1}^{\ast }+\psi _{-1}^{\ast
})\psi _{0}\psi _{0}^{\ast },\
\end{equation*}%
\begin{equation}
\frac{d}{dt}\psi _{1}^{\ast }\ =-\frac{i}{h^{2}}[\psi _{2}^{\ast }-2\psi
_{1}^{\ast }+\psi _{0}^{\ast }]+i\alpha (\psi _{2}^{\ast }+\psi _{0}^{\ast
})\psi _{1}\psi _{1}^{\ast },\   \label{eq2.6}
\end{equation}%
\begin{eqnarray*}
\frac{d}{dt}\psi _{2}^{\ast }\ &=&-\frac{i}{h^{2}}[\Phi _{+}^{\ast }(\psi
_{\ 2},\psi _{\ 2}^{\ast };\psi _{1},\psi _{1}^{\ast };\psi _{\ 0},\psi _{\
0}^{\ast })-2\psi _{2}^{\ast }+\psi _{1}^{\ast }]+ \\
&&+i\alpha \lbrack \Phi _{+}^{\ast }(\psi _{\ 2},\psi _{\ 2}^{\ast };\psi
_{1},\psi _{1}^{\ast };\psi _{\ 0},\psi _{\ 0}^{\ast })+\psi _{1}^{\ast
}]\psi _{2}\psi _{2}^{\ast }.\
\end{eqnarray*}%
The next proposition, characterizing the Hamiltonian structure of the
reduced dynamical system \ (\ref{eq2.5}) and \ (\ref{eq2.6}), holds.

\begin{proposition}
The \ eight-dimensional complex dynamical system \ (\ref{eq2.5}) and (\ref%
{eq2.6}) is Hamiltonian on the manifold $M_{h}^{(1)}\simeq M_{h}^{8}$ \ with
respect to the canonical symplectic structure
\begin{equation}
\omega _{h}^{(2)}=\sum_{j=\overline{-2,1}}(dp_{-j}\wedge d\psi
_{-j}+dp_{-j}^{\ast }\wedge d\psi _{-j}^{\ast }),  \label{eq2.7}
\end{equation}%
where, by definition,
\begin{equation}
p_{-j}:=\mathcal{L}_{h,\psi _{n-j+1}}^{(1)\prime ,\ast }[\psi _{n},\psi
_{n}^{\ast }]\cdot 1,\text{ \ \ \ }p_{-j}^{\ast }:=\mathcal{L}_{h,\psi
_{n-j+1}^{\ast }}^{(1)\prime ,\ast }[\psi _{n},\psi _{n}^{\ast }]\cdot 1
\label{eq2.8}
\end{equation}%
for $j=\overline{-2,1}\ $ modulo the constraint $\mathrm{grad}\mathcal{L}%
_{h}^{(1)}[\psi _{n},\psi _{n}^{\ast }]=0,n\in \mathbb{Z},$ on the
submanifold $M_{h}^{(1)}\simeq M_{h}^{8},$ and the sign $^{"\prime ,\ast "}$
means the corresponding discrete Frech\`{e}t up-directed derivative and its
natural conjugation with respect to the convolution mapping on $T^{\ast
}(M_{h}^{(1)})\times T(M_{h}^{(1)}).$
\end{proposition}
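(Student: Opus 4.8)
The plan is to read the statement as a discrete, second-order Bogoyavlensky--Novikov (Ostrogradsky) reduction and to produce the symplectic structure directly from the first variation of the action. The starting point is that, by (\ref{eq2.2}), the submanifold $M_h^{(1)}$ is precisely the solution set of the discrete Euler--Lagrange equations $\mathrm{grad}\,\mathcal{L}_h^{(1)}[\psi_n,\psi_n^*]=0$ for the density (\ref{eq2.1}), which involves the finite window $\psi_{n-2},\dots,\psi_{n+2}$ together with its conjugates. I would therefore proceed in three stages: extract the canonical momenta and the Liouville $1$-form from a discrete summation-by-parts; identify $\omega_h^{(2)}$ of (\ref{eq2.7}) as its exterior derivative; and then check Hamiltonicity of the reduced flow (\ref{eq2.5})--(\ref{eq2.6}).

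First I would compute $\delta\mathcal{L}_h^{(1)}$ by differentiating the density (\ref{eq2.1}) in each of its shifted arguments and redistributing the shifts by means of the discrete shift operator $E:\psi_n\mapsto\psi_{n+1}$ (and its inverse). Summation by parts splits the result into a ``bulk'' part, namely $\sum_{n}\langle\mathrm{grad}_n\mathcal{L}_h^{(1)},(\delta\psi_n,\delta\psi_n^*)^{\intercal}\rangle$, and a telescoping boundary remainder. The boundary remainder, evaluated across a single node, is by construction $\sum_{j=\overline{-2,1}}\left(p_{-j}\,\delta\psi_{-j}+p_{-j}^*\,\delta\psi_{-j}^*\right)$, where the coefficients $p_{-j},p_{-j}^*$ are exactly the conjugated up-directed Fréchet derivatives of (\ref{eq2.8}); these are the discrete Ostrogradsky momenta of the second-order density (\ref{eq2.1}).

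On $M_h^{(1)}$ the bulk part vanishes identically, so the differential of the action reduces to the tautological $1$-form $\vartheta:=\sum_{j=\overline{-2,1}}\left(p_{-j}\,d\psi_{-j}+p_{-j}^*\,d\psi_{-j}^*\right)$, and $d\vartheta$ is exactly (\ref{eq2.7}); closedness is then automatic since $\omega_h^{(2)}=d\vartheta$ is exact. For non-degeneracy I would invoke the stipulated non-degeneracy of the action functional $\mathcal{L}_h^{(1)}$ (in particular $c_1(h)=\alpha h/4\neq0$ in (\ref{eq2.1a})): it makes the discrete Legendre map sending the seed data $(\bar\psi_{-1},\bar\psi_0,\bar\psi_1,\bar\psi_2)$ and conjugates to the mixed coordinates involving $p_{-j},p_{-j}^*$ a local diffeomorphism, which is precisely the identification $M_h^{(1)}\simeq M_h^8$ recorded before the statement; hence $\omega_h^{(2)}$ is a genuine symplectic form on $M_h^8$.

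It remains to verify that the reduced flow is $\omega_h^{(2)}$-Hamiltonian. I would take the reduced Hamiltonian to be the restriction to $M_h^{(1)}$ of the conserved $H$ of (\ref{eq1.5a}) (equivalently of $H^{(w)}$ in (\ref{eq1.16d})), and check that the canonical equations $\tfrac{d}{dt}\psi_{-j}=\{\psi_{-j},H\}$, $\tfrac{d}{dt}p_{-j}=\{p_{-j},H\}$ dual to (\ref{eq2.7}) reproduce (\ref{eq2.5})--(\ref{eq2.6}). \textbf{The main obstacle is this last verification.} Because each momentum $p_{-j}$ is a nonlinear function of the eight seed coordinates through the Legendre map, writing the flow in canonical form forces one to use the up- and down-recurrences $\Phi_{\pm},\Phi_{\pm}^*$ of (\ref{eq2.3})--(\ref{eq2.4}) to eliminate the out-of-window values $\psi_{\pm3},\dots$ and to confirm that the equations close on $M_h^8$. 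A cleaner route that avoids inverting the Legendre map is to prove instead that the pullback to $M_h^{(1)}$ of the ambient (non-canonical) symplectic structure (\ref{eq1.4}) coincides with (\ref{eq2.7}); granting this, the Hamiltonicity of $K$ on $M_h^{(1)}$ follows by symplectic (Dirac) restriction from its Hamiltonicity on $M_h$, using that $M_h^{(1)}$ is $K$-invariant and that $\mathcal{L}_h^{(1)}$ is a linear combination of the conservation laws (\ref{eq1.16a}). Establishing that the restricted form of (\ref{eq1.4}) is non-degenerate and equal to the Ostrogradsky form $d\vartheta$ is where the genuine work resides.
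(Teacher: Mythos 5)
Your construction is essentially the paper's own proof: the discrete summation-by-parts decomposition you describe is exactly the paper's discrete Gelfand--Dikiy relationship (\ref{eq2.9}), your tautological one-form $\vartheta$ with the Ostrogradsky momenta (\ref{eq2.8}) is precisely the paper's boundary one-form $\alpha_h^{(1)}$ in (\ref{eq2.10}), and the identification of $(\psi_{-j},\psi_{-j}^{\ast},p_{-j},p_{-j}^{\ast})$ as global canonical coordinates on $M_h^{8}$ with $\omega_h^{(2)}=d\vartheta$ is exactly how the paper concludes. The Hamiltonicity check you flag as ``the main obstacle'' is not carried out inside the paper's proof either: the paper settles it only after the proposition, by generating the reduced invariants from (\ref{eq2.11}) and asserting the canonical equations (\ref{eq2.13}), so your honest identification of where the remaining work lies is, if anything, more careful than the source.
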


\begin{proof}
The symplectic structure \ (\ref{eq2.7}) easily follows \cite{Pr,PBPS,BPS}
from the discrete version of the Gelfand-Dikiy \cite{GD} differential
relationship:%
\begin{eqnarray}
d\mathcal{L}_{h}^{(1)}[\psi _{n},\psi _{n}^{\ast }] &=&<\mathrm{grad}%
\mathcal{L}_{h}^{(1)}[\psi _{n-1},\psi _{n-1}^{\ast }],(d\psi _{n-1},d\psi
_{n-1}^{\ast })^{\intercal }>+  \label{eq2.9} \\
&&+\frac{d}{dn}\alpha _{h}^{(1)}(\psi _{n-1},\psi _{n-1}^{\ast };\psi
_{n},\psi _{n}^{\ast };\psi _{n+1},\psi _{n+1}^{\ast };\psi _{n+2},\psi
_{n+2}^{\ast }),  \notag
\end{eqnarray}%
where $\alpha _{h}^{(1)}\in \Lambda ^{1}(M_{h}^{(1)})$ is, owing to the
condition $\mathrm{grad}\mathcal{L}_{h}^{(1)}[\psi _{n}\ ,\psi _{n}^{\ast
}]=0,n\in \mathbb{Z},$ on the submanifold $M_{h}^{(1)},$ not depending on
the index $n\in \mathbb{Z}\ $ and suitably defined one-form on the manifold $%
M_{h}^{8},$ allowing the following canonical representation:%
\begin{equation}
\alpha _{h}^{(1)}=\sum_{j=\overline{-2,1}}(p_{-j}(h)d\psi _{-j}+p_{-j}^{\ast
}(h)d\psi _{-j}^{\ast })  \label{eq2.10}
\end{equation}%
with functions $p_{-j},p_{-j}^{\ast }:M_{h}^{(1)}\times \mathbb{R}%
\rightarrow \mathbb{C},$ $\ j=\overline{-2,1}.$ The latter, being defined by
the expressions \ (\ref{eq2.8}), compile jointly with variables $\psi
_{-j},\psi _{-j}^{\ast }:M_{h}^{(1)}\times \mathbb{R}\rightarrow \mathbb{C},$
$\ j=\overline{-2,1},$ the global coordinates on the finite dimensional
symplectic manifold $M_{h}^{8},$ proving the proposition.
\end{proof}

The dynamical system \ (\ref{eq2.5}) and (\ref{eq2.6}) \ on the manifold $%
M_{h}^{8}$ possesses, except its Hamiltonian function, additionally exactly
four mutually commuting functionally independent conservation laws $\mathcal{%
H}_{k},\mathcal{H}_{k}^{\ast }:M_{h}^{8}\rightarrow \mathbb{R},$ $k=%
\overline{0,1},$ \ and one Casimir function $\mathcal{\bar{H}}%
_{0}:M_{h}^{8}\rightarrow \mathbb{R},$ which can be calculated \cite{PBPS}
from the following functional relationships
\begin{eqnarray}
&<&\mathrm{grad}H_{k}[\psi _{n},\psi _{n}^{\ast }],K[\psi _{n},\psi
_{n}^{\ast }]>:=  \label{eq2.11} \\
& =& -\frac{d}{dn}\mathcal{H}_{k}(\psi _{n-1},\psi _{n-1}^{\ast };\psi
_{n},\psi _{n}^{\ast };\psi _{n+1},\psi _{n+1}^{\ast };\psi _{n+2},\psi
_{n+2}^{\ast }),  \notag
\end{eqnarray}

\begin{eqnarray}
&<&\mathrm{grad}H_{k}^{\ast }[\psi _{n},\psi _{n}^{\ast }],K[\psi _{n},\psi
_{n}^{\ast }]>:=  \notag \\
&=&-\frac{d}{dn}\mathcal{H}_{k}^{\ast }(\psi _{n-1},\psi _{n-1}^{\ast };\psi
_{n},\psi _{n}^{\ast };\psi _{n+1},\psi _{n+1}^{\ast };\psi _{n+2},\psi
_{n+2}^{\ast }),  \notag
\end{eqnarray}

\begin{eqnarray}
&<&\mathrm{grad}\bar{H}_{0}[\psi _{n},\psi _{n}^{\ast }],K[\psi _{n},\psi
_{n}^{\ast }]>:=  \notag \\
&=&-\frac{d}{dn}\mathcal{\bar{H}}_{0}(\psi _{n-1},\psi _{n-1}^{\ast };\psi
_{n},\psi _{n}^{\ast };\psi _{n+1},\psi _{n+1}^{\ast };\psi _{n+2},\psi
_{n+2}^{\ast }),  \notag
\end{eqnarray}%
for $k=\overline{0,1}\ $modulo the constraint $\mathrm{grad}\mathcal{L}%
_{h}^{(1)}[\psi _{n-2},\psi _{n-2}^{\ast }]=0,n\in \mathbb{Z},$ on the
manifold $M_{h}^{(1)}\simeq M_{h}^{8},$ where $\frac{d}{dn}:=\Delta -1$ is a
discrete analog of the differentiation and the shift operator $\Delta $ acts
as $\Delta f_{n}:=f_{n+1},n\in \mathbb{Z},$ for any mapping $\ f:\mathbb{%
Z\rightarrow C}.\ $\ \ From \ (\ref{eq2.11}) one can obtain by menas of
simple but tedious calculations analytical expressions for the invariants $%
\mathcal{H}_{k}^{\ast }:M_{h}^{8}\rightarrow \mathbb{R},$ \ which give rise
to the corresponding Hamiltonian function for the dynamical system \ (\ref%
{eq2.5}) and (\ref{eq2.6}), owing to the relationship \ (\ref{eq1.16b}):
\begin{equation}
\mathcal{H}=\frac{2}{\alpha h^{3}}\mathcal{\bar{H}}_{0}+\frac{1}{h}(\mathcal{%
H}_{0}+\mathcal{H}_{0}^{\ast }),  \notag
\end{equation}
satisfying the following canonical Hamiltonian system with respect to the
symplectic structure \ (\ref{eq2.7}):%
\begin{eqnarray}
d\psi _{-j}/dt &=&\partial \mathcal{H}/\partial p_{-j},\text{ \ \ \ }d\psi
_{-j}^{\ast }/dt=\partial \mathcal{H}/\partial p_{-j}^{\ast },
\label{eq2.13} \\
dp_{-j}/dt &=&-\partial \mathcal{H}/\partial \psi _{-j},\text{ \ \ \ }%
dp_{-j}^{\ast }/dt=-\partial \mathcal{H}/\partial \psi _{-j}^{\ast },  \notag
\end{eqnarray}%
where $j=\overline{-2,1},$ which is a Liouville-Arnold integrable on the
symplectic manifold $M_{h}^{8}.$

\bigskip

\begin{remark}
The same way on can construct the finite dimensional reduction of the
discrete Schr\"{o}dinger dynamical system \ (\ref{eq1.3}) in the case $N=2.$
Making use of the calculated before conservation laws \ (\ref{eq1.16a}), one
can take the corresponding action functional as
\begin{eqnarray}  \label{eq3.1}
& \mathcal{L}_{h}^{(2)} & = \ \bar{c}_{0}(h) \sum_{n\in \mathbb{Z}} \ln
|1-\alpha h^{2}\psi _{n}^{\ast }\psi _{n}|+\ c_{0}(h) \sum_{n\in \mathbb{Z}}
(\psi _{n}^{\ast }\psi _{n-1}+\psi _{n}\psi _{n-1}^{\ast })+  \notag \\
&&+ \ c_{1}(h) \sum_{n\in \mathbb{Z}} \left( \frac{1}{2}\psi _{n}^{2} \psi
_{n-1}^{\ast,2} + \psi _{n} \psi _{n}^{\ast } ( \psi _{n+1}\psi _{n-1}^{\ast
} + \psi _{n-1}\psi _{n+1}^{\ast }) \right. \\
&& \left. + \ \frac{1}{2}\psi _{n-1}^{2}\psi _{n}^{\ast ,2} -\frac{%
\psi_{n}\psi _{n-2}^{\ast }}{\alpha h^{2}} -\frac{\psi _{n-2}\psi _{n}^{\ast
}}{\alpha h^{2}} \right) +  \notag \\
&&+ \ c_{2}(h) \sum_{n\in \mathbb{Z}} \left( \frac{1}{3}\psi _{n}^{3}\psi
_{n-1}^{\ast ,3}+\psi _{n}\psi _{n+1}\psi _{n-1}^{\ast }\psi _{n}^{\ast
}(\psi _{n}\psi _{n-1}^{\ast }+\psi _{n+1}\psi _{n}^{\ast }+ \right.  \notag
\\
&&+ \ \psi _{n+2}\psi _{n+1}^{\ast })-\frac{\psi _{n}\psi _{n-1}^{\ast }}{%
\alpha h^{2}}(\psi _{n}\psi _{n-2}^{\ast }+\psi _{n+1}\psi _{n-1}^{\ast })-
\notag \\[1ex]
&&- \ \frac{\psi _{n}\psi _{n}^{\ast }}{\alpha h^{2}}(\psi _{n+1}\psi
_{n-2}^{\ast }+\psi _{n+2}\psi _{n-1}^{\ast })+\frac{\ \psi _{n}\psi
_{n-3}^{\ast }}{\alpha ^{2}h^{4}}+\frac{1}{3}\psi _{n}^{\ast ,3}\psi
_{n-1}^{3}+  \notag \\[1ex]
&&+ \ \psi _{n}^{\ast }\psi _{n+1}^{\ast }\psi _{n-1}\psi _{n}(\psi
_{n}^{\ast }\psi _{n-1}+\psi _{n+1}^{\ast }\psi _{n}+\psi _{n+2}^{\ast }\psi
_{n+1})-  \notag \\[1ex]
&& \left. - \ \frac{\psi _{n}^{\ast }\psi _{n-1}}{\alpha h^{2}}(\psi
_{n}^{\ast }\psi _{n-2}+\psi _{n+1}^{\ast }\psi _{n-1})-\frac{\psi _{n}\psi
_{n}^{\ast }}{\alpha h^{2}}(\psi _{n+1}^{\ast }\psi _{n-2}+\psi _{n+2}^{\ast
}\psi _{n-1})+\frac{\ \psi _{n}^{\ast }\psi _{n-3}}{\alpha ^{2}h^{4}} \right)
\notag
\end{eqnarray}%
with mappings $\bar{c}_{0},\ c_{j}:\mathbb{R}_{+}\rightarrow \mathbb{R},$ $j=%
\overline{0,2},$ defined as before from the condition that there exists the
limit
\begin{equation}
\mathcal{\tilde{L}}^{(2)}:=\lim_{h\rightarrow 0}\mathcal{L}_{h}^{(2)}.
\label{eq3.1a}
\end{equation}%
The respectively defined critical submanifold
\begin{equation}
M_{h}^{(2)}:=\left\{ (\psi ,\psi ^{\ast })^{\intercal }\in M_{h}:\ \mathrm{%
grad}\mathcal{L}_{h}^{(2)}[\psi _{n},\psi _{n}^{\ast }]=0,\ n\in \mathbb{Z}%
\right\}  \label{eq3.2}
\end{equation}%
becomes diffeomorphic to a finite dimensional canonically symplectic
manifold $M_{h}^{12}$ on which the suitably reduced discrete Schr\"odinger
dynamical system \ (\ref{eq1.3}) becomes a Liouville-Arnold integrable
Hamiltonian system. The details of the related calculations are planned to
be presented in a separate work under preparation.
\end{remark}

\section{\label{Sec_3} The Fourier analysis of the integrable discrete NLS
system}

It easy to observe that the linearized Schr\"{o}dinger system (\ref{eq1.1})
admits the following Fourier type solution:
\begin{equation}
\psi (x,t)=\int_{\mathbb{R}}ds\ \xi (s,t)\exp (ixs),\text{ \ \ \ }\psi
^{\ast }(x,t)=\int_{\mathbb{R}}ds\ \xi ^{\ast }(s,t)\exp (-ixs)\
\label{psi-F}
\end{equation}%
for all $x,t\in \mathbb{R}$, where $d\xi /dt=-is^{2}\xi ,$ $d\xi ^{\ast
}/dt=\ is^{2}\xi ^{\ast },$i.e.,
\begin{equation}
\xi (s,t)=\bar{\xi}(s)e^{-is^{2}t},\xi ^{\ast }(s,t)=\bar{\xi}^{\ast
}(s)e^{is^{2}t}
\end{equation}%
and $\bar{\xi},\bar{\xi}^{\ast }:\mathbb{R}\rightarrow \mathbb{C}$ are
prescribed functions (the Fourier transforms of   initial conditions).
Likewise, the linearized discrete Schr\"{o}dinger dynamical system \ (\ref%
{eq1.3}) allows the following general discrete Fourier\ type solution:%
\begin{equation}
\psi _{n}=\int_{\mathbb{R}}ds\ \xi _{h}(s,t)\exp (ihns),\text{ \ \ \ }\psi
_{n}^{\ast }=\int_{\mathbb{R}}ds\ \xi _{h}^{\ast }(s,t)\exp (-ihns)\
\label{eq4.1}
\end{equation}%
for all $n\in \mathbb{Z},$ where the evolution parameter $t\in \mathbb{R}%
,(\psi _{n},\psi _{n}^{\ast })^{\intercal }\in w_{h,2}^{2}(\mathbb{Z};%
\mathbb{C}^{2})$ and
\begin{equation}
\xi _{h}(s,t)=\bar{\xi}_{h}(s)\exp (-i\frac{4t}{h^{2}}\sin ^{2}\frac{sh}{2}%
),\ \qquad \xi _{h}^{\ast }(s,t)=\bar{\xi}_{h}^{\ast }(s)\exp (\ i\frac{4t}{%
h^{2}}\sin ^{2}\frac{sh}{2}).  \label{eq4.2}
\end{equation}%
Here the function $\ (\bar{\xi}_{h},\bar{\xi}_{h}^{\ast })^{\intercal }\in $
$W_{h,2}^{2}(\mathbb{R};\mathbb{C}^{2})\subset L_{2}(\mathbb{R};\mathbb{C}%
^{2}),$ where the functional space $W_{h,2}^{2}(\mathbb{R};\mathbb{C}^{2})$
is yet to be determined. From the boundary condition $(\psi _{n},\psi
_{n}^{\ast })^{\intercal }\in w_{h,2}^{2}(\mathbb{Z};\mathbb{C}^{2})$ it
follows that expressions
\begin{eqnarray}
\frac{1}{2\pi }\sum_{n\in \mathbb{Z}}\psi _{n}^{\ast }\psi _{n} &=&\int_{%
\mathbb{R}}ds\xi _{h}^{\ast }(s)\xi _{h}(s)<\infty ,  \label{eq4.3} \\
\frac{1}{2\pi }\sum_{n\in \mathbb{Z}}(\psi _{n+1}^{\ast }\psi _{n}+\psi _{n\
}^{\ast }\psi _{n+1}) &=&2\int_{\mathbb{R}}ds\cos (hs)\xi _{h}^{\ast }(s)\xi
_{h}(s)<\infty ,  \notag
\end{eqnarray}%
ensure the boundedness of the Hamiltonian function \ (\ref{eq1.6}), thereby
determining a functional space $W_{h,2}^{2}(\mathbb{R};\mathbb{C}^{2})$ to
which belong the vector function $(\xi _{h},\xi _{h}^{\ast })^{\intercal
}\in $ $L_{2}(\mathbb{R};\mathbb{C}^{2}).$ However the discrete evolution is
not following along the continuous trajectory.

Being motivated by works \cite{Ci-oscyl,CR}, we modify the discrete system
as folows in order to obtain the exact discretization:
\begin{equation}
\begin{array}{l}
\label{del-dNLS}\displaystyle\frac{d}{dt}\psi _{n}=\frac{i}{\delta ^{2}}%
(\psi _{n+1}-2\psi _{n}+\psi _{n-1})-i\alpha (\psi _{n+1}+\psi _{n-1})\psi
_{n}\psi _{n}^{\ast }, \\[2ex]
\displaystyle\frac{d}{dt}\psi _{n}^{\ast }=-\frac{i}{\delta ^{2}}(\psi
_{n+1}^{\ast }-2\psi _{n}^{\ast }+\psi _{n-1}^{\ast })+i\alpha (\psi
_{n+1}^{\ast }+\psi _{n-1}^{\ast })\psi _{n}\psi _{n}^{\ast },%
\end{array}%
\end{equation}%
Substituting (\ref{eq4.1}) into the linearization of (\ref{del-dNLS}) we
obtain
\begin{equation}
\xi _{h}(s,t)=\bar{\xi}_{h}(s)\exp (-i\frac{4t}{\delta ^{2}}\sin ^{2}\frac{sh%
}{2}),\ \qquad \xi _{h}^{\ast }(s,t)=\bar{\xi}_{h}^{\ast }(s)\exp (\ i\frac{%
4t}{\delta ^{2}}\sin ^{2}\frac{sh}{2}).  \label{xi-del}
\end{equation}%
Therefore,  linearization of the discretization (\ref{del-dNLS}) is exact
(i.e., $\psi (nh,t)=\psi _{n}(t),n\in \mathbb{Z},$ if we assume
\begin{equation}
\delta =\frac{2}{s}\sin \frac{hs}{2}\   \label{xi-del-1}
\end{equation}%
for any $h\in \mathbb{R}.$ $\ \ $Thus, the parameter $\delta >0$ depends on
$s\in \mathbb{R}\ $ yet for small $h\rightarrow 0$ one gets  $\delta
=h(1+O(h^{2}s^{2}).$

The nonlinear case is more difficult. In the continuous nonlinear case (\ref%
{psi-F}) we have
\begin{equation}
d\xi /dt=-is^{2}\xi -2i\alpha \beta \lbrack s;\xi ]\ ,\qquad \quad d\xi
^{\ast }/dt=is^{2}\xi ^{\ast }+2i\alpha \beta ^{\ast }[s;\xi ^{\ast }],
\label{xi-del-2}
\end{equation}
where  the functionals  $\ \beta ,\beta ^{\ast }:\mathbb{R\times }L_{2}(%
\mathbb{R};\mathbb{C})\ \rightarrow L_{2}(\mathbb{R};\mathbb{C}),$
determined as
\begin{eqnarray*}
\beta \lbrack s;\xi ] &:&=\int\limits_{\mathbb{R}^{2}}ds^{\prime } ds^{\prime\prime}\xi
(s+s^{\prime }-s^{\prime\prime})\xi (s^{\prime\prime})\xi ^{\ast }(s^{\prime }), \\
\beta ^{\ast }[s;\xi ] &:&=\int\limits_{\mathbb{R}^{2}}ds^{\prime }ds^{\prime\prime} \xi
^{\ast }(s+s^{\prime }-s^{\prime\prime})\xi ^{\ast }(s^{\prime\prime})\xi (s^{\prime }),
\end{eqnarray*}
depend  both on $s\in \mathbb{R}$ and on the element $\xi \in L_{2}(\mathbb{R%
};\mathbb{C}),$ as well as depends parametrically on the evolution parameter
$t\in \mathbb{R}$ through (\ref{xi-del-2}). In the nonlinear discrete case (%
\ref{del-dNLS})  we, respectively, obtain:
\begin{equation}
d\xi _{h}/dt=-i\xi _{h}\frac{4}{\delta ^{2}}\sin ^{2}\frac{sh}{2} - 2i\alpha
\beta _{h}[s;\xi _{h}],   \qquad      d\xi _{h}^{\ast }/dt=i\xi _{h}^{\ast }\frac{4}{%
\delta ^{2}}\sin ^{2}\frac{sh}{2}   +2i\alpha
\beta _{h}^{\ast }[s;\xi _{h}],   \\[2ex]    \label{xi-del-3}
\end{equation}%
where the functionals $\ \beta _{h},\beta _{h}^{\ast }:\mathbb{R\times }%
L_{2}(\mathbb{R};\mathbb{C})\ \rightarrow L_{2}(\mathbb{R};\mathbb{C})\ $
are determined as
\begin{eqnarray}
\beta _{h}[s;\xi _{h}] &:&=\int\limits_{\mathbb{R}^{2}}ds^{\prime }ds^{\prime\prime}\cos
[h(s+s^{\prime }-s^{\prime\prime})]\xi _{h}(s+s^{\prime }-s^{\prime\prime})\xi _{h}(s^{\prime\prime})\xi _{h}^{\ast
}(s^{\prime }),  \label{xi-del-4} \\
\beta _{h}^{\ast }[s;\xi _{h}] &:&=\int\limits_{\mathbb{R}^{2}}ds^{\prime
}ds^{\prime\prime}\cos [h(s+s^{\prime }-s^{\prime\prime})]\xi _{h}^{\ast }(s+s^{\prime }-s^{\prime\prime})\xi
_{h}^{\ast }(s^{\prime\prime})\xi _{h}(s^{\prime })     \notag
\end{eqnarray}%
for  any $s\in \mathbb{R}.$ To proceed further with the truly nonlinear case
still presist to be  a nontrivial problem, yet we hope to obtain a suitable
procedure analogous to that of \cite{Ci,CR-PRE}.

Instead of it one can analyze the related functional space \ constraints on
the space of functions $(\bar{\xi}_{h},\bar{\xi}_{h}^{\ast })^{\intercal
}\in $ $W_{h,2}^{2}(\mathbb{R};\mathbb{C}^{2}),$ representing solutions to
the discrete nonlinear equation (\ref{eq1.3}) via the expressions \ (\ref%
{eq4.1}), being imposed by the corresponding finite dimensional reduction
scheme of Section \ (\ref{Sec_2}). This procedure actually may be realized,
if to consider the derived before recurrence relationships \ (\ref{eq2.3})
(or similarly, \ (\ref{eq2.4})) allowing to obtain the related constraints
on the space of functions $(\bar{\xi}_{h},\bar{\xi}_{h}^{\ast })^{\intercal
}\in $ $W_{h,2}^{2}(\mathbb{R};\mathbb{C}^{2}),$ but the resulting
relationships prove  to be much complicated and cumbersome expressions.

Thus, one can suggest the following practical numerical-analytical scheme of
constructing solutions to the discrete nonlinear Schr\"{o}dinger dynamical
system \ (\ref{eq1.3}): \ first to solve the Cauchy problem to the
finite-dimensional system of ordinary differential equations \ (\ref{eq2.5})
and \ (\ref{eq2.6}), and next to substitute them into the system of
recurrent algebraic relationships \ (\ref{eq2.3}) and \ (\ref{eq2.4}),
obtaining this way the whole infinite hierarchy of the sought for solutions.

\section{Conclusion}

Within the presented investigation of solutions to the discrete nonlinear
Schr\"odinger dynamical system (\ref{eq1.3}) we have succeeded in two
important points. First, we have developed an effective enough scheme of
invariant reducing the infinite system of ordinary differential equations (%
\ref{eq1.3}) to an equivalent finite one of ordinary differential equations
with respect to the evolution parameter $t\in \mathbb{R}$. Second, we
constructed a finite set of recurrent algebraic regular relationships,
allowing to expand the obtained before solutions to any discrete order $n\in
\mathbb{Z}\ $and giving rise to the sought for solutions of the system (\ref%
{eq1.3}).$\ $

It is important to mention here that within the presented analysis we have
not used the Lax type representation for the discrete nonlinear
Schr\"odinger dynamical system \ (\ref{eq1.3}), whose existence was stated
many years ago in \cite{AL} and whose complete solution set analysis was
done in works \cite{AL,AL1,BP,No} by means of both the inverse scattering
transform and the algebraic-geometric methods. Concerning the\ set of
recurrent relationships for exact solutions to the finite-dimensional
reduction of the discrete nonlinear Schr\"odinger dynamical system \ (\ref%
{eq1.3}), obtained both in the presented work and in work \cite{BP}, based
on the corresponding Lax type representation, an interesting problem of
finding between them relationship arises, and an answer to it would explain
the hidden structure of the complete Liouville-Arnold integrability of the
related set of the reduced ordinary differential equations.

\section{Acknowledgements}

J.L.C. acknowledges a financial support from the National Science Center
(Poland) under NCN grant No. 2011/01/B/ST1/05137 and appreciates fruitful
discussions during the 6th Symposium on Integrable Systems held in Bia\l %
ystok (Poland) on 26-29 June, 2013. J.L.C. is also grateful for discussions and  
hospitality during his visit in Lviv (Ukraine), July, 2013. \ A.P. is thankful to Prof. D. Blackmore
for valuable discussions and remarks during the Nonlinear Mathematical
Physics Conference held in the Sophus Lie Center, Nordfjordeid (Norway) on
June 04-14, 2013.

\vspace{0.5cm}


\begin{thebibliography}{99}
\bibitem{AL} Ablowitz M.J. and Ladik J.F., Nonlinear differential-difference
equations, J. Math. Phys. 16 (1975), p. 598--603.

\bibitem{AL1} Ablowitz M.J. and Ladik J.F., Nonlinear
differential-difference equations and Fourier analysis, J. Math. Phys. 17
(1976), p. 1011-1018.

\bibitem{APT} Ablowitz M.J., Prinari B. and Trubatch A.D., Discrete and
Continuous Nonlinear Schr\"{o}dinger Systems, Cambridge University Press,
Cambridge 2004.

\bibitem{AM} Abraham R. and Marsden J., Foundation of Mechanics, The
Benjamin/Cummings Publ. Co, Massachusetts 1978.

\bibitem{Ar} Arnold V. I., Mathematical Methods of Classical Mechanics,
Springer, New York 1978.

\bibitem{AH} Atkinson K. and Han W., Theoretical numerical analysis, Texts
in Applied Mathematics 39, Springer 2001.

\bibitem{Bl} B\l aszak M., Multi-Hamiltonian Theory of Dynamical Systems,
Springer, Berlin-Heidelberg 1998.

\bibitem{BSP} B\l aszak M., Szum A. and Prykarpatsky A.K., Central extension
approach to integrable field and lattice-field systems in 2+1 dimensions,
Rep. Math. Phys. 44 (1-2) (1999), p. 37-44.

\bibitem{BPS} Blackmore D., Prykarpatsky A. and Samoylenko V., Nonlinear
Dynamical Systems of Mathematical Physics: Spectral and
Differential-Geometric Integrability Analysis, World Scientific, New York
2010.

\bibitem{BP} Bogoljubov (Jr.) and Prykarpatsky A.K., Inverse periodic
problem for discrete approximation of nonlinear Schr\"odinger equation,
\textit{Doklady AN USSR} 263 (4) (1982), p. 815-819 [in Russian].

\bibitem{Ci-oscyl} Cie\'sli\'nski J.L., On the exact discretization of the
classical harmonic oscillator equation, J.\ Difference Equ.\ Appl. 17
(2011), p. 1673-1694.

\bibitem{Ci} Cie\'sli\'nski J.L., Locally exact modifications of numerical
schemes, Computers and Mathematics with Applications 65 (2013), p. 1920-1938.

\bibitem{CR} Cie\'sli\'nski J.L. and Ratkiewicz B., Long-time behavior of
discretizations of the simple pendulum equation, J. Phys. A: Math. Theor. 42
(2009) 105204 (29pp).

\bibitem{CR-PRE}  Cie\'sli\'nski J.L. and Ratkiewicz B., Improving the
accuracy of the discrete gradient method in the one-dimensional case, Phys.\
Rev.\ E 81 (2010) 016704 (6pp).

\bibitem{GD} Gelfand I.M. and Dikiy L.A., Integrable nonlinear equations and
the Liouville theorem, \textit{Funk. Anal. Appl.} 13 (1979), p. 8-20 [in
Russian].

\bibitem{Ke} Kevrekidis P. G., Discrete Nonlinear Schr\"{o}dinger Equation:
Mathematical Analysis, Numerical Computations, and Physical Perspectives,
Springer, Berlin 2009.

\bibitem{LLC} Li H.M., Li Y.Q. and Chen Y., A new integrable discrete
generalized nonlinear Schr\"{o}dinger equation and its reductions,
arXiv:1312.3138v1 [nlin.SI].

\bibitem{LWY} Levi D., Winternitz P. and Yamilov R.I., Lie point symmetries
of differential-difference equations, J. Phys. A: Math. Theor. 43 (2010),
292002 (14pp).

\bibitem{MQ} McLachlan R.I. and Quispel G.R.W., Discrete gradient methods
have an energy conservation law, Discrete and Continuous Dynamical Systems A
34 (2014), p. 1099-1104.

\bibitem{No} Novikov S. P. (ed.): Theory of Solitons, Consultants Bureau,
New York 1984 (original Russian version: Nauka, Moscow 1980).

\bibitem{PBPS} Prykarpatsky Y.A., Bogolubov N.N. (Jr.), Prykarpatski A.K.
and Samoylenko V.H., On the complete integrability of nonlinear dynamical
systems on functional manifolds within the gradient-holonomic approach,
Reports on Math Phys. 68 (2011), p. 289-318. Available at:
http://publications.ictp.it IC/2010/091.

\bibitem{Pr} Prykarpatski A.K., Elements of the integrability theory of
discrete dynamical systems, Ukr. Math. J. 39 (1) (1987), p. 298-301, Plenum
Publishing Corporation. Translated from \textit{Ukrainskii Matematicheskii
Zhurnal} 39 (1) (1987), p. 87-92.

\bibitem{PM} Prykarpatsky A.K. and Mykytyuk I., Algebraic Integrability of
nonlinear dynamical systems on manifolds: classical and quantum aspects, 553
pp., Kluwer Academic Publishers 1998.

\bibitem{Sa} Salerno M., Quantum deformations of the discrete nonlinear
Schr\"odinger equation, Phys. Rev. A 46 (1992), p. 6856-6859.

\bibitem{YJ} Yan Z.Y. and Jiang D.M., Nonautonomous discrete rogue wave
solutions and interactions in an inhomogeneous lattice with varying
coefficients, J. Math. Anal. Appl. 395 (2012), p. 542-549.
\end{thebibliography}
\end{document}